\documentclass[a4paper,UKenglish,cleveref, autoref, thm-restate, nolineno]{socg-lipics-v2021}

\pdfoutput=1 
\hideLIPIcs  

\usepackage[ruled,vlined,linesnumbered]{algorithm2e}

\title{Asymptotically Robust Learning-Augmented Algorithms for Preemptive FIFO Buffer Management}

\author{Wen-Han Hsieh}{Department of Computer Science and Information Engineering, National Cheng Kung University,
Tainan, Taiwan}{p76131165@gs.ncku.edu.tw}{https://orcid.org/0009-0006-3103-714X}{} 

\author{Ya-Chun Liang}{Department of Computer Science, National Tsing Hua University, Hsinchu, Taiwan}{ycliang@cs.nthu.edu.tw}{https://orcid.org/0000-0002-7359-4335}{}

\authorrunning{Wen-Han Hsieh and Ya-Chun Liang} 

\Copyright{Wen-Han Hsieh and Ya-Chun Liang} 

\ccsdesc[100]{Theory of computation~Design and analysis of algorithms}

\keywords{FIFO buffer management, Online algorithms, Competitive analysis, Learning-augmented algorithms}

\EventEditors{John Q. Open and Joan R. Access}
\EventNoEds{2}
\EventLongTitle{42nd Conference on Very Important Topics (CVIT 2016)}
\EventShortTitle{CVIT 2016}
\EventAcronym{CVIT}
\EventYear{2016}
\EventDate{December 24--27, 2016}
\EventLocation{Little Whinging, United Kingdom}
\EventLogo{}
\SeriesVolume{42}
\ArticleNo{23}

\newcommand{\ALG}{\mathrm{ALG}}
\newcommand{\OPT}{\mathrm{OPT}}
\newcommand{\PG}{\mathrm{PG}}

\begin{document}

\maketitle

\begin{abstract}
We present a learning-augmented online algorithm for the preemptive FIFO buffer management problem,
where packets arrive online to a finite-capacity buffer, must be transmitted in FIFO order, and the algorithm may preemptively discard buffered packets to accommodate future arrivals. Our algorithm
simultaneously achieves
$1$-consistency, $\eta$-smoothness, and asymptotic $\sqrt{3}$-robustness, where $\eta$ denotes the prediction error. 
Specifically, it attains an optimal competitive ratio of $1$ under perfect predictions, degrades smoothly as the prediction error increases, and maintains an asymptotic competitive ratio of $\sqrt{3}$ under arbitrarily inaccurate predictions, matching the best-known worst-case guarantee for the classical online problem~\cite{englert2009lower}.

A key technical contribution of our work is the introduction of an \emph{output-based prediction error metric}. Because capacity constraints dictate that only a strictly bounded subset of arriving packets is ultimately transmitted, our metric assesses prediction quality over the resulting optimal schedules rather than the raw input sequences, avoiding artificial error penalties. 
To guarantee robustness, our algorithm dynamically monitors predictions and executes a \emph{buffer-clearing strategy} upon transitioning to a worst-case fallback mechanism. We prove that the competitive loss incurred by this clearing operation is bounded by an additive capacity constant that vanishes asymptotically. Finally, we show that our algorithm provides a generalized framework for learning-augmented buffer management: substituting the fallback module with any $\beta$-competitive online algorithm immediately yields asymptotic $\beta$-robustness.
\end{abstract}

\section{Introduction} 
\label{sec1:introduction}
Efficient network traffic management is essential for high-performance applications, such as video conferencing and cloud computing, which require high data throughput, low latency, and 
strict
reliability. In such 
architectures,
Quality of Service (QoS) switches play a central role by offering differentiated services 
across
various traffic categories. 
System
performance relies on two fundamental mechanisms: \emph{buffer management} and \emph{packet scheduling}. The former operates under spatial constraints:
packets arrive with associated values, and the system must make instantaneous admission
and eviction
decisions under a fixed buffer capacity to maximize the total transmitted value. In contrast, packet scheduling focuses on temporal constraints, 
typically 
modeled as scheduling on an unbounded buffer subject to 
release times and deadlines.

We study the \emph{preemptive FIFO buffer management problem}. 
While
numerous 
variants of packet scheduling and buffer management
have been extensively studied,
the competitive ratio for this 
specific
problem has remained at $\sqrt{3}$ since the 
seminal
work of Englert and Westermann~\cite{englert2009lower}, 
leaving a persistent gap with the known lower bound of $1.419$.
Furthermore, 
the design of learning-augmented algorithms for this setting remains entirely unexplored.
Our objective is to bridge this gap by integrating machine-learned predictions
into the decision-making process.
We aim to achieve optimal performance under 
perfectly
accurate predictions while 
strictly
maintaining the best-known $\sqrt{3}$-competitiveness as a worst-case guarantee, regardless of prediction quality.


\subsection{Our Results} 
\label{sec1-1:our results}
\begin{itemize}
    \item We design a learning-augmented online algorithm (Algorithm~\ref{alg:main algorithm}) that achieves an optimal competitive ratio of 1 under perfect predictions (Theorem~\ref{thm:main consistency}) and degrades smoothly as the prediction error $\eta$ increases (Theorem~\ref{thm:main smoothness}). Even when predictions are arbitrarily inaccurate, the algorithm maintains an asymptotic competitive ratio of $\sqrt{3}$ (Theorem~\ref{thm:main robustness}).

    \item We 
    introduce
    an 
    \emph{output-based prediction error}
    (Definition~\ref{def:prediction error}), defined as the total value of the symmetric difference between the 
    offline
    optimal schedules under the true and predicted inputs. This metric 
    fundamentally departs from traditional input-based error measures. It
    is motivated by the observation that
    in admission control, untransmitted packets do not impact the final objective; thus,
    a well-defined error metric must 
    penalize only those 
    discrepancies that 
    explicitly alter the transmitted schedule.

    \item Our algorithm incorporates a 
    dynamic
    protective fallback mechanism that switches to 
    a robust baseline
    upon detecting an unreliable prediction. Since the buffer state at the time of switching was 
    constructed under a distrusted prediction,
    inheriting it 
    directly could compromise future performance guarantees.
    To resolve this, we propose
    a 
    \emph{buffer-clearing strategy}
    upon 
    transition.
    We
    prove that the resulting 
    competitive
    loss is bounded by a 
    strict
    constant 
    (Lemma~\ref{lem:OPT comparison}) that vanishes asymptotically as the total transmitted value grows (Theorem~\ref{thm:main robustness}).

    \item We 
    demonstrate
    that our algorithm 
    provides a generalized
    framework for learning-augmented buffer management (Corollary~\ref{cor:generalized framework}). Any $\beta$-competitive algorithm can 
    serve as the fallback mechanism, immediately
    yielding an asymptotic $\beta$-robustness guarantee.
\end{itemize}

\subsection{Related Work} 
\label{sec1-2:related work}
\subparagraph{The FIFO Buffer Management Problem.}
The problem was initially introduced by Aiello et al.~\cite{aiello2005competitive} and Kesselman et al.~\cite{kesselman2004buffer}. The former restricted packet values to only two 
possible values,
$1$ and $\alpha$ (for some fixed $\alpha > 1$), while the latter examined the general case.
Both studies 
demonstrated
that the \textsc{Greedy} algorithm achieves a competitive ratio of 2. Mansour et al.~\cite{mansour2004optimal} 
later
introduced a lossy scheduling scheme that 
permitting the preemption of 
previously accepted packets. 
The literature subsequently split into \emph{non-preemptive} and \emph{preemptive} models.

For the 
non-preemptive 
model, Andelman et al.~\cite{andelman2003competitive-preemptive} and Zhu~\cite{zhu2004analysis} proposed the \textsc{Ratio} algorithm,
proving 
an optimal competitive ratio of $1 + \ln\alpha$ for the general case, where $\alpha$ is the ratio of the 
maximum to minimum
packet value.

For the 
preemptive
model, Kesselman et al.~\cite{kesselman2005improved} improved 
the upper bound to $1.983$ via the \textsc{Preemptive Greedy($\beta$)} algorithm. 
They also established 
a
lower bound of $\phi = (1+\sqrt{5})/2 \approx 1.618$ for their algorithm, and strengthened the 
general problem
lower bound from $\sqrt{2}\approx1.414$~\cite{andelman2003competitive-preemptive} to 1.419.
Bansal et al.~\cite{bansal2004further} 
later
improved the competitive ratio to $7/4 = 1.75$ 
using
\textsc{Modified-Preemptive Greedy($\beta$)} algorithm. 
Finally, 
Englert and Westermann~\cite{englert2009lower} provided a 
tightened
analysis,
establishing an upper bound of
$\sqrt{3} \approx 1.732$ for a specific parameter $\beta = 2 + \sqrt{3}$,
alongside 
a lower bound of $1 + 1/{\sqrt{2}} \approx 1.707$. 

\begin{table}[ht] 
\caption{
Summary of previous  
results for the FIFO buffer management problem. 
($-$) indicates 
no explicitly stated bound.
}
\label{table:FIFO buffer management summary}
\centering
    \begin{tabular}{llcc}
        \hline\hline
        \textbf{Model} & \textbf{Algorithm} & 
        \textbf{Upper Bound} & \textbf{Lower Bound}\\
        \hline
        \multirow{2}{*}{Non-preemptive} & 
        \textsc{Greedy}~\cite{aiello2005competitive, kesselman2004buffer} & 
        2 & 1.618 \\
        & \textsc{Ratio}~\cite{andelman2003competitive-preemptive} & 
        $1+\ln\alpha$ & $1+\ln\alpha$\\
        \hline
        \multirow{4}{*}{Preemptive} & 
        \textsc{Greedy}~\cite{aiello2005competitive, kesselman2004buffer} & 
        2 & 1.618 \\
        & \textsc{Preemptive Greedy}~\cite{kesselman2005improved} & 
        1.983 & 1.618\\
        & \textsc{Modified Preemptive Greedy}~\cite{bansal2004further} & 
        1.75 & $-$\\
        & \textsc{Preemptive Greedy}~\cite{englert2009lower} & 
        $\sqrt{3} \approx 1.732$ & $1+1/\sqrt{2} \approx 1.707$\\
        \hline\hline
    \end{tabular}
\end{table}

\subparagraph{Online Algorithms with Predictions.}
Learning-augmented algorithms~\cite{lykouris2018competitive, 
purohit2018improving} leverage machine-learned predictions to 
bypass the pessimistic limits of classical
worst-case analysis. The goal is to design 
frameworks that achieve near-optimal performance when predictions are accurate (consistency) while strictly bounding performance degradation when predictions fail (robustness). 
This paradigm has seen extensive application across domains 
including, 
among others,
Scheduling~\cite{purohit2018improving, lindermayr2022permutation, 
im2023non, liang2023learning, balkanski2025scheduling}, 
Caching~\cite{lykouris2018competitive, jiang2022online, 
bansal2022learning}, and the Traveling Salesman 
Problem~\cite{hu2022online, chawla2023online, gouleakis2023learning, hu2025online}. 

A 
critical
design challenge 
is 
selecting a suitable prediction model and 
an appropriate
error metric. 
Prior models
include permutation predictions for input orderings~\cite{lindermayr2022permutation, 
lassota2023minimalistic}, frequency predictions for arrival 
rates~\cite{im2021online, angelopoulos2023online, berg2024online}, and action predictions for offline optimal 
decisions~\cite{lattanzi2020online, antoniadis2023online}. Most 
relevant to our 
methodology,
Liang et al.~\cite{liang2023learning} proposed the \textsc{LAP} algorithm for 
packet scheduling with deadlines,
quantifying 
error via the $\ell_\infty$-norm of the competitive ratio at each time step.


\section{Preliminaries} 
\label{sec2:preliminaries}
\subsection{Problem Definition} 
\label{sec2-1:problem definition}
The 
preemptive FIFO buffer management problem
relaxes the 
strict admission 
constraints of the traditional FIFO model by allowing 
the eviction of previously accepted packets.
The objective is to maximize the total transmitted value subject to a 
strictly
fixed buffer capacity.
We model time as a sequence of discrete 
steps $i\in \{1, 2, \ldots, T\}$. Let $\sigma = (\sigma_1, \sigma_2, \ldots, \sigma_T)$ 
denote
the complete input sequence, where each $\sigma_i = (p_{i_1},p_{i_2},\ldots,p_{i_k})$
is 
the sequence of packets arriving at time step $i$ for some $k \in \mathbb{Z}_+$. 
Each packet $p \in \sigma_i$ carries a value $v(p)$.
Every
time step consists of two sequential phases: 
\begin{enumerate}
    \item {Arrival Phase:} 
    Upon the arrival of each packet, the algorithm 
    must instantaneously decide
    to  
    accept or drop it. The algorithm 
    may preempt (discard) any resident packet to free capacity.
    Each accepted packet is 
    enqueued
    at the tail of the buffer,
    preserving FIFO semantics.

    \item {Transmission Phase.}
    After processing all 
    arrivals
    in $\sigma_i$, exactly one transmission occurs: the packet at the head of the buffer is transmitted, and the system 
    irrevocably accrues its value.
\end{enumerate}

\subsection{Notation} 
\label{sec2-2:notation}
We 
establish notation applicable 
to any algorithm
$\ALG$,
including the offline optimal schedule ($\OPT$) and the \textsc{Preemptive Greedy} algorithm ($\PG$).
We denote the 
final schedule produced by 
$\ALG$ on input $\sigma$ as $\ALG(\sigma)$, and its total transmitted value as $v(\ALG(\sigma)) = \sum_{p \in \ALG(\sigma)} v(p)$. For any time step $t$, 
$v(\ALG_{<t}(\sigma))$ and $v(\ALG_{\le t}(\sigma))$
denote the total value transmitted strictly before and including 
step $t$,
respectively.
Similarly, 
$v(\ALG_{\ge t}(\sigma))$ and $v(\ALG_{> t}(\sigma))$ 
denote
the value transmitted from
step $t$ onward.
By definition:
\begin{equation}
    v(\ALG(\sigma)) = v(\ALG_{<t}(\sigma)) + v(\ALG_{\ge t}(\sigma)) = v(\ALG_{\le t}(\sigma)) + v(\ALG_{> t}(\sigma)). \notag
\end{equation}

For an algorithm $\ALG$
initialized with 
buffer state $B$, we write $\ALG(B, \sigma)$ to denote its schedule.
$B^{\ALG}_t$
represents
the buffer state at the end of time step 
$t$, holding 
total value $v(B^{\ALG}_t)$.
For a learning-augmented algorithm 
utilizing
prediction $\hat{\sigma}$ 
alongside
the actual input $\sigma$, we denote 
the schedule 
as $\ALG(\sigma, \hat{\sigma})$ and 
the associated prediction error as
$\eta(\sigma, \hat{\sigma})$ 
(formalized
in Section~\ref{sec2-4:prediction error}). Similarly, 
the additive property holds 
for any time step $t$, 
i.e.,
\begin{equation}
    v(\ALG(\sigma, \hat{\sigma})) = v(\ALG_{< t}(\sigma, \hat{\sigma})) + v(\ALG_{\ge t}(\sigma, \hat{\sigma})) = v(\ALG_{\le t}(\sigma, \hat{\sigma})) + v(\ALG_{> t}(\sigma, \hat{\sigma})) \notag.
\end{equation}

\subsection{Performance Metrics} 
\label{sec2-3:performance metrics}
\subsubsection{Online Algorithms}
Performance is evaluated 
via competitive analysis~\cite{sleator1985amortized, karlin1988competitive}.
\begin{definition}[Competitive Ratio] 
\label{def:competitive ratio}
    An online algorithm $\ALG$ for a maximization objective is $c$-competitive
    if there exists a constant $k \ge 0$ such that for all input sequences $\sigma$:
    \begin{equation}
         v(\OPT(\sigma))\le c \cdot v(\ALG(\sigma)) + k, \quad \text{where } c = \sup_{\sigma} \frac{v(\OPT(\sigma))}{v(\ALG(\sigma))}.
         \notag
    \end{equation}
\end{definition}

An algorithm is 
strictly $c$-competitive 
if $k=0$, and weakly $c$-competitive otherwise. 
We 
also define
the asymptotic competitive ratio
to capture limiting behavior.
\begin{definition}[Asymptotic Competitive Ratio] \label{def:asymptotic competitive ratio}
    The asymptotic competitive ratio $AC$ 
    is defined as:
    \begin{equation}
        AC = \liminf\limits_{n\to\infty}\left\{\frac{v(\ALG(\sigma))}{v(\OPT(\sigma))} \;\middle|\; v(\OPT(\sigma)) = n\right\} .\notag
    \end{equation}
\end{definition}
\begin{remark} 
\label{rmk:asymptotic equivalent formulation}
    For a maximization objective, the traditional definition of the asymptotic competitive ratio is mathematically equivalent to the following formulation:
    \begin{equation}
        \liminf_{n\to\infty}\left\{\frac{v(\ALG(\sigma))}{v(\OPT(\sigma))} \;\middle|\; v(\OPT(\sigma))=n\right\} \geq \frac{1}{c}\iff\limsup_{v(\ALG(\sigma))\to\infty} \frac{v(\OPT(\sigma))}{v(\ALG(\sigma))} \leq c \notag
    \end{equation}
    In our subsequent proofs, we adopt the 
    right-hand
    formulation
     (evaluating the limit as $v(\ALG(\sigma)) \to \infty$).
     This
     maintains consistency with
     both the standard structure of competitive analysis and the established conventions in prior work on this problem.
\end{remark}

\subsubsection{Learning-Augmented Online Algorithms}
The performance of 
learning-augmented algorithms is characterized by three criteria:
\begin{definition}[Learning-Augmented Properties] \label{def:learning-augmented properties}
    For a maximization problem, an algorithm $\ALG(\sigma, \hat{\sigma})$ is:
    \begin{itemize}
        \item \textbf{$\alpha$-consistent:}
        if 
        \begin{equation}
            \sup_{\hat{\sigma} = \sigma} \frac{v(\OPT(\sigma))}{v(\ALG(\sigma, \hat{\sigma}))} \le \alpha.\quad\text{(performance under perfect predictions)} \notag
        \end{equation}
        \item \textbf{$\beta$-robust:} 
        if
            \begin{equation}
                \sup_{\hat{\sigma}} \frac{v(\OPT(\sigma))}{v(\ALG(\sigma, \hat{\sigma}))} \le \beta.\quad
                \text{(worst-case guarantee)} \notag
            \end{equation}
        \item \textbf{$\eta$-smooth:}    
        if there exists a non-decreasing 
        function $f(\eta)$ such that
            \begin{align}
                 &\frac{\OPT(\sigma)}{\ALG(\sigma, \hat{\sigma})} \le f(\eta(\sigma, \hat{\sigma})),\text{satisfying $f(0) = \alpha$, and 
                 bounded globally by $\beta$.} \notag\\
                 &\text{(graceful degradation of the competitive ratio as prediction error $\eta$ increases)}\notag
            \end{align}
    \end{itemize}
\end{definition}

\subsection{Output-Based Prediction Error} 
\label{sec2-4:prediction error}
A fundamental challenge in 
learning-augmented mechanism design
is 
formulating an error metric that faithfully captures the relationship between prediction accuracy and algorithmic degradation.
Traditionally, many
online
problems adopt input-based error measures, such as the $\ell_1$-norm 
distance
between the predicted and actual input sequences. While 
mathematically intuitive for unconstrained environments, these metrics structurally overestimate prediction error in admission control paradigms like the preemptive FIFO buffer, where the system processes an entire arrival sequence but fundamentally admits and transmits only a strictly bounded subset.

To illustrate this structural mismatch, 
consider a 
severely capacity-constrained 
scenario where
a massive volume of
packets 
arrives
over time, but the buffer 
size permits the transmission of only a tiny fraction (e.g., admitting $100$ out of $1000$ arriving packets).
If 
a prediction mechanism perfectly anticipates the values and optimal schedules for the transmitted subset but wildly misestimates the values of the discarded majority, an input-based metric would accumulate a massive, yet entirely artificial, error penalty. 
Since neither the offline optimal schedule nor the online algorithm ever transmits these discarded packets, such mispredictions exert zero influence on the final objective value. This reveals a critical flaw: penalizing discrepancies over untransmitted packets conflates irrelevant input noise with actual scheduling degradation.

Motivated by this observation, we argue that a well-defined error metric for buffer management must strictly isolate the discrepancies that explicitly alter the final transmitted sequence. 
Consequently, 
we introduce 
a novel
output-based prediction error.
Instead of measuring the distance between input sequences, this metric is formulated over 
the total value of the symmetric difference between the 
offline
optimal schedules 
generated
under the true and predicted inputs. 
By shifting the evaluation from the input stream to the resulting optimal schedules, this metric precisely captures the true cost of prediction failures. 
The following are the
formal
definitions of symmetric difference and our prediction error
metric.
\begin{definition}[Symmetric Difference] 
\label{def:symmetric difference}
    For sets $A$ and $B$, the symmetric difference is
    \(A \triangle B = (A\setminus B)\cup (B\setminus A)\).
\end{definition}
\begin{definition}[Output-Based Prediction Error] \label{def:prediction error}
    Let $\OPT(\sigma)$ and $\OPT(\hat{\sigma})$ be the offline optimal schedules 
    for
    the true sequence $\sigma$ and prediction $\hat{\sigma}$, respectively. The prediction error $\eta(\sigma,\hat\sigma)$ is:
    \begin{equation}
    \label{eq:prediction error}
        \eta(\sigma, \hat{\sigma}) = \sum_{p \in 
        \OPT(\hat{\sigma}) \triangle \OPT(\sigma)} 
        v(p) = \underbrace{\sum_{q \in \OPT(\hat{\sigma}) 
        \setminus \OPT(\sigma)} v(q)}_{D(\hat{\sigma}, 
        \sigma)} + \underbrace{\sum_{q' \in \OPT(\sigma) 
        \setminus \OPT(\hat{\sigma})} v(q')}_{D(\sigma, 
        \hat{\sigma})}. \notag
    \end{equation}
\end{definition}
\begin{remark} \label{remark:prediction error components}
    $D(\hat{\sigma}, \sigma)$ captures 
    \emph{false positives}: 
    packets the predicted optimal transmits but the true optimal discards, causing the online algorithm to retain suboptimal packets.
    $D(\sigma, \hat{\sigma})$ captures 
    \emph{false negatives}: 
    high-value packets the true optimal transmits but the predicted schedule ignores.
\end{remark}


\section{Algorithms}  
\label{sec3:algorithms}
In this section, we first introduce an intuitive algorithm (Algorithm~\ref{alg:intuitive algorithm}) that achieves optimal consistency.
However, it fails to guarantee a finite competitive ratio bound when predictions are arbitrarily poor. To address this limitation, we refine it to our main 
framework
(Algorithm~\ref{alg:main algorithm}), which integrates a dynamic protective fallback mechanism with a buffer-clearing strategy to guarantee 
strict asymptotic
robustness.

\subsection{1-Consistent, but Non-Robust Algorithm} \label{sec3-1:intutitive algorithm}
Algorithm~\ref{alg:intuitive algorithm} strictly adheres to the optimal schedule of the predicted sequence $\hat{\sigma}$, entirely trusting the prediction without evaluating its accuracy.
\begin{algorithm}[ht]
    \caption{1-Consistent, but Non-Robust Algorithm}
    \label{alg:intuitive algorithm}
    \KwIn{Actual input $\sigma$, predicted input $\hat{\sigma}$}
    Compute the offline optimal schedule $\OPT(\hat{\sigma})$ over the prediction $\hat{\sigma}$\;
    \For{each time step $i \in [T]$}{
        Process arrivals $\sigma_i$ identically to $\OPT(\hat{\sigma})$\;
        Transmit the head packet of 
        the buffer 
        $B_{i}^{\ALG}$\;
    }
\end{algorithm}
\begin{theorem} \label{thm:intuitive consistency}
    Algorithm~\ref{alg:intuitive algorithm} is 1-consistent.
\end{theorem}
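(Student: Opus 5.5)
By the definition of $\alpha$-consistency in Definition~\ref{def:learning-augmented properties}, it suffices to fix an arbitrary input $\sigma$ with perfect prediction $\hat{\sigma} = \sigma$ and show that $v(\ALG(\sigma,\hat{\sigma})) \ge v(\OPT(\sigma))$. Since $\OPT$ is optimal, the reverse inequality $v(\ALG(\sigma,\hat{\sigma})) \le v(\OPT(\sigma))$ holds automatically. Hence the ratio equals $1$, and the supremum over $\hat\sigma=\sigma$ is $1$.

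The plan is to show, by induction on the time step $i$, that Algorithm~\ref{alg:intuitive algorithm} reproduces $\OPT(\hat{\sigma})$ exactly when run on $\sigma=\hat\sigma$.

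\emph{Invariant.} At the end of each step $i$, the buffer $B^{\ALG}_i$ coincides with the buffer $B^{\OPT}_i$ of $\OPT(\hat{\sigma})$, as an ordered sequence of the same packets. Moreover, both have transmitted the same packet at step $i$.

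\emph{Base case.} Both buffers are empty before step $1$.

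\emph{Inductive step.}
\begin{itemize}
\item Assume $B^{\ALG}_{i-1} = B^{\OPT}_{i-1}$.
\item Since $\sigma_i = \hat{\sigma}_i$, the packets arriving at step $i$ are exactly those $\OPT(\hat\sigma)$ planned for. The algorithm therefore performs the identical sequence of accept, drop and preempt decisions.
\item These decisions are well defined. Every packet that $\OPT(\hat\sigma)$ preempts is present in $B^{\ALG}$, by the induction hypothesis.
\item Enqueueing at the tail preserves the FIFO order, so the two buffers after the arrival phase are equal as sequences.
\item Both then transmit the same head packet, which restores the invariant for step $i$.
\end{itemize}

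\emph{Conclusion.} Summing the transmitted values over all $T$ steps gives
\[
v(\ALG(\sigma,\sigma)) = v(\OPT(\hat\sigma)) = v(\OPT(\sigma)).
\]

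The only subtle point, which I expect to be the main obstacle, is making precise what ``process arrivals identically to $\OPT(\hat\sigma)$'' means. An offline optimal schedule may be described with knowledge of the future. The decisions at step $i$ must nevertheless be executable by the online algorithm on the actual arrivals.

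I would resolve this by fixing $\OPT(\hat\sigma)$ as a fixed per-packet decision record, computed in advance from $\hat\sigma$. For each arriving packet the record says whether it is accepted and, if so, at which step it is preempted or transmitted. When $\sigma=\hat\sigma$, packets can be matched one-to-one by their arrival step and position in $\sigma_i$. Under this matching, the record is a valid online policy: it depends only on $\hat\sigma$, which is known in advance, and never on unseen parts of $\sigma$.

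Finally, $\OPT$ is assumed to respect the capacity constraint at every step. The replicated schedule is therefore also feasible, which completes the argument.
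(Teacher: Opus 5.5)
Your proof is correct and takes the same route as the paper. The paper simply asserts that Algorithm~\ref{alg:intuitive algorithm} replicates $\OPT(\hat\sigma)$, so that $\ALG(\sigma,\hat\sigma)=\OPT(\hat\sigma)$ and the ratio under $\hat\sigma=\sigma$ is $v(\OPT(\sigma))/v(\OPT(\hat\sigma))=1$. Your induction on buffer contents and your reading of $\OPT(\hat\sigma)$ as a precomputed per-packet decision record make that replication claim rigorous, and they correctly confine the identity $\ALG(\sigma,\hat\sigma)=\OPT(\hat\sigma)$ to the case $\sigma=\hat\sigma$, where it actually holds.
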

\begin{proof} 
    Since Algorithm~\ref{alg:intuitive algorithm} 
    perfectly replicates
    $\OPT(\hat{\sigma})$ throughout 
    the
    execution, we have 
    \begin{equation*} \label{eq:perfect prediction}
        \ALG(\sigma,\hat\sigma) = \OPT(\hat{\sigma}).
    \end{equation*}
    Under perfect predictions ($\hat\sigma = \sigma$), 
    the competitive ratio evaluates to:
	\begin{equation*}
        \sup\limits_{\hat\sigma = \sigma} \frac{v(\OPT(\sigma))}{v(\ALG(\sigma, \hat\sigma))} 
            = \sup\limits_{\hat\sigma 
            = \sigma}\frac{v(\OPT(\sigma))}{v(\OPT(\hat\sigma))} 
            = 1.    \notag
	\end{equation*}
    Hence, Algorithm~\ref{alg:intuitive algorithm} achieves 1-consistency.
\end{proof}

\begin{theorem} 
\label{thm:intuitive robustness}
    Algorithm~\ref{alg:intuitive algorithm} exhibits unbounded robustness.
\end{theorem}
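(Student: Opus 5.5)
To show that Algorithm~\ref{alg:intuitive algorithm} has unbounded robustness, I will build a family of instances $(\sigma, \hat\sigma)$, indexed by a parameter $M$. On these instances $v(\OPT(\sigma))/v(\ALG(\sigma,\hat\sigma))$ grows without bound as $M\to\infty$; even an additive constant $k$ cannot save the ratio, because the gap itself will scale with $M$. The idea is that the algorithm replays the accept/drop decisions of $\OPT(\hat\sigma)$ on the real arrivals. I therefore need a prediction that makes $\OPT(\hat\sigma)$ drop exactly the packets that are valuable in $\sigma$, and keep packets that are worthless in $\sigma$.

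The simplest version uses buffer capacity $B=1$, though any $B$ works by scaling. I take a single time step with two arriving packets $p_1,p_2$. In the true input $\sigma$, I set $v(p_1)=1$ and $v(p_2)=M$. In the prediction $\hat\sigma$, the same two packets arrive with the values swapped: $\hat v(p_1)=M$ and $\hat v(p_2)=1$. The predicted optimum $\OPT(\hat\sigma)$ accepts $p_1$, and when $p_2$ arrives it either drops $p_2$ or keeps $p_1$; in both cases only $p_1$ is transmitted. Algorithm~\ref{alg:intuitive algorithm} processes the arrivals of $\sigma$ identically, so it accepts $p_1$, drops $p_2$, and collects value $1$, while $v(\OPT(\sigma))=M$.

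To defeat the additive constant $k$ in Definition~\ref{def:competitive ratio}, I will repeat this gadget over $T$ consecutive steps, one gadget per step. Then $v(\OPT(\sigma))=TM$ and $v(\ALG)=T$. Any bound $TM\le c\,T+k$ then forces $c\ge M-k/T$, which tends to infinity as $M,T\to\infty$. The same construction also gives an asymptotic competitive ratio of order $1/M$.

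The main obstacle is formalizing what "process arrivals identically" means when the predicted sequence differs from the true one. I need to fix a convention: packets are matched by their arrival index within each time step, and the algorithm copies the accept/drop/preempt action on the matched packet. My construction keeps the arrival structure of $\sigma$ and $\hat\sigma$ identical and changes only the values, so this matching is well defined and the copied actions are always feasible in the buffer. I will state this convention explicitly. I will also note that the error $\eta=2T(M+1)$ is large, which is consistent with the later smoothness result.
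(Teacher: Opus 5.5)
Your proposal is correct, but it argues in the opposite direction from the paper. The paper builds no bad instance. It decomposes $\OPT(\hat\sigma)$ and $\OPT(\sigma)$ into the common part $S^{\mathrm{COM}}$ plus $D(\hat\sigma,\sigma)$ and $D(\sigma,\hat\sigma)$, and derives $|v(\OPT(\sigma))-v(\OPT(\hat\sigma))|\le\eta(\sigma,\hat\sigma)$. From this it obtains the \emph{upper} bound $v(\OPT(\sigma))/v(\ALG(\sigma,\hat\sigma))\le f(\eta)$ and notes that $f(\eta)\to\infty$ as $\eta\to\infty$.

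That route buys the inequality $v(\OPT(\sigma))\le v(\OPT(\hat\sigma))+\eta(\sigma,\hat\sigma)$, which is reused in Theorem~\ref{thm:main smoothness}. By itself it does not establish unboundedness, because a diverging upper bound says nothing about how large the ratio actually gets. The limit also holds $v(\OPT(\hat\sigma))$ and $v(\ALG(\sigma,\hat\sigma))$ fixed while $\eta$ grows, although all three are determined by the same instance.

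Your value-swapping family supplies exactly the missing witness. The ratio is $M$ on concrete inputs, and repeating the gadget over $T$ steps defeats any additive constant $k$. So Algorithm~\ref{alg:intuitive algorithm} is not even asymptotically robust, which is the right contrast with Theorem~\ref{thm:main robustness}. Your explicit matching convention for ``process arrivals identically'' also makes precise what the paper uses implicitly when it writes $\ALG(\sigma,\hat\sigma)=\OPT(\hat\sigma)$ in Theorem~\ref{thm:intuitive consistency}.

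Two minor points. First, the per-step gadget already works for every capacity without scaling. At most one packet is sent per step, so any optimum for $\hat\sigma$ attaining $TM$ must transmit exactly the $T$ packets of predicted value $M$. Second, your side value $\eta=2T(M+1)$ double counts. Under Definition~\ref{def:prediction error} each of the $2T$ packets in the symmetric difference contributes once, giving $T(M+1)$ with true values, or $2TM$ if each schedule is valued in its own instance. Neither point affects the argument.
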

\begin{proof}
    To evaluate the robustness, we first establish the competitive ratio as a function of the prediction error.
    We compute the difference in the total transmitted value by explicitly decomposing $\OPT(\hat{\sigma})$ and $\OPT(\sigma)$ into their intersection and respective set differences. Letting $S^{\mathrm{COM}} = \OPT(\hat\sigma) \cap \OPT(\sigma)$ denote the set of commonly transmitted packets, we expand the schedules as follows:
    \begin{equation*} 
    \label{eq:decompostion}
        \begin{cases}
            \OPT(\hat\sigma) = S^{\mathrm{COM}} \cup D(\hat{\sigma},\sigma),  \\
            \OPT(\sigma) = S^{\mathrm{COM}} \cup D(\sigma,\hat{\sigma}).
        \end{cases}
    \end{equation*}
    Here, as established in Remark~\ref{remark:prediction error components}, $D(\hat\sigma, \sigma)$ and $D(\sigma, \hat\sigma)$ represent the packets 
    transmitted exclusively by $\OPT(\hat\sigma)$ and $\OPT(\sigma)$, respectively. The absolute difference evaluates to:
    \begin{align*}
        \left|v(\OPT(\sigma)) - v(\OPT(\hat{\sigma}))\right| &= \left|v(S^{\mathrm{COM}}\cup D(\sigma, \hat{\sigma})) 
        - v(S^{\mathrm{COM}}\cup D(\hat{\sigma}, \sigma))\right| \\
        &= \left|v(D(\sigma, \hat{\sigma})) - v(D(\hat{\sigma}, \sigma))\right| \\
        &\leq \left|v(D(\sigma, \hat{\sigma}))\right| + \left|v(D(\hat{\sigma}, \sigma))\right| \\
        &= \eta(\sigma, \hat{\sigma}). 
    \end{align*}
    Rearranging the absolute value inequality and
    dividing by $v(\ALG(\sigma, \hat{\sigma}))$ yields:
    \begin{equation} 
    \label{eq:intuitive smoothness}
        \frac{v(\OPT(\sigma))}{v(\ALG(\sigma,\hat{\sigma}))}\le \frac{v(\OPT(\hat\sigma))}{v(\ALG(\sigma,\hat{\sigma}))} + \frac{\eta(\sigma,\hat\sigma)}{v(\ALG(\sigma,\hat{\sigma}))}\triangleq f(\eta(\sigma,\hat{\sigma})).
    \end{equation}
    Equation~\eqref{eq:intuitive smoothness} demonstrates that the competitive ratio is strictly bounded by a non-decreasing function $f(\eta(\sigma,\hat{\sigma}))$. For worst-case predictions where the error approaches infinity, this upper bound evaluates to:
    \begin{align*}
        \lim\limits_{\eta(\sigma,\hat{\sigma})\to\infty} f(\eta(\sigma,\hat{\sigma})) = \frac{v(\OPT(\hat\sigma))}{v(\ALG(\sigma,\hat{\sigma}))}+\lim\limits_{\eta(\sigma,\hat{\sigma})\to\infty}\frac{\eta(\sigma,\hat{\sigma})}{v(\ALG(\sigma,\hat{\sigma}))} = \infty.
    \end{align*}
    Hence, Algorithm~\ref{alg:intuitive algorithm} 
    lacks a finite robustness guarantee under arbitrary predictions.
\end{proof}

\subsection{1-Consistent, Smooth, and $\sqrt{3}$-Robust Algorithm} \label{sec3-2:main algorithm}
\begin{algorithm}[ht]
    \caption{1-Consistent, Smooth, and $\sqrt{3}$-Robust Algorithm}
    \label{alg:main algorithm}
    \KwIn{Actual input $\sigma$, predicted input $\hat{\sigma}$, 
    threshold parameter $\rho \in [1, \sqrt{3}]$}
    $\texttt{SWITCHED} \gets \text{FALSE}$\tcp*{State Flag}
    Compute the offline optimal schedule $\OPT(\hat{\sigma})$ over $\hat{\sigma}$\;
    \For{each time step $i \in [T]$}{
        Simulate $\PG(2+\sqrt{3})$ processing arrivals on virtual buffer $B_{i-1}^{\PG}$\; 
        Transmit the head packet of $B_{i}^{\PG}$ to 
        continuously track
        $v({\PG_{\le i}(\sigma))}$\; \tcp*[f]{Maintain virtual baseline state}
        }
        \uIf(\tcp*[f]{Protective Fallback Evaluation}){\texttt{SWITCHED} is TRUE}{
            Follow $\PG(2+\sqrt{3})$ to process arrivals $\sigma_i$ on 
            physical buffer
            $B_{i-1}^{\ALG}$\; \tcp*[f]{Fallback State}
        }{
            \If{$v(\OPT_{\le i}(\hat{\sigma})) / v(\ALG_{\le i}(\sigma)) > \rho$ or $v(\PG_{\le i}(\sigma)) / v(\ALG_{\le i}(\sigma)) > 1$}{
                $\texttt{SWITCHED} \gets \text{TRUE}$\;
                $B_{i-1}^{\ALG} \gets \emptyset$ \tcp*{Execute buffer-clearing}
                Follow $\PG(2+\sqrt{3})$ to process arrivals $\sigma_i$ on $B_{i-1}^{\ALG}$\;
            } \Else {
                Follow $\OPT(\hat{\sigma})$ to process arrivals $\sigma_i$ on $B_{i-1}^{\ALG}$\;
            }
        }
        Transmit the head packet of $B_{i}^{\ALG}$\;
\end{algorithm}

Algorithm~\ref{alg:main algorithm} extends 
the intuitive algorithm by embedding a real-time protective mechanism.
Upon detecting an untrustworthy prediction, it permanently switches to $\PG(2+\sqrt{3})$, the best-known $\sqrt{3}$-competitive 
baseline 
algorithm. The switch is 
governed by two simultaneous thresholds:
\begin{itemize}
    \item {Prediction-Consistency Check ($v(\OPT_{\le i}(\hat{\sigma})) / v(\ALG_{\le i}(\sigma)) > \rho$):} 
    Detects when the algorithm significantly lags behind the predicted optimal schedule. The parameter $\rho$ regulates strictness. The lower bound $\rho=1$ enforces absolute adherence to yield $1$-consistency, while the upper bound $\rho=\sqrt{3}$ structurally matches the worst-case robustness limit.
    \item {PG-Robustness Check ($v(\PG_{\le i}(\sigma)) / v(\ALG_{\le i}(\sigma)) > 1$):}
    Detects critical false negatives. If the prediction systematically ignores high-value packets, the Prediction-Consistency Check alone is insufficient, as both the predicted optimal and the algorithm would incorrectly discard them, keeping their ratio artificially low. To prevent this, a virtual instance of $\PG(2+\sqrt{3})$ operates continuously on a separate buffer. It serves as a real-time lower bound of baseline performance, triggering an immediate switch if the prediction falls behind what a greedy heuristic would naturally achieve.
\end{itemize}

Upon violating either threshold, Algorithm~\ref{alg:main algorithm} 
executes the \emph{buffer-clearing strategy} and permanently
transitions to the baseline.

\subparagraph{The Buffer-Clearing Strategy.}
\label{sec3-3:buffer-clearing strategy}
Switching to the baseline algorithm during the execution introduces a technical challenge. At the moment of the switch, the current buffer state was constructed based on inaccurate predictions. Continuing to process packets with this suboptimal state could violate the competitive bounds of the $\PG$ algorithm.
Therefore, upon switching, the algorithm executes a buffer-clearing strategy that 
instantaneously discards all resident packets.

While discarding the entire buffer may appear excessively aggressive, the theoretical insight is that the strict capacity constraint naturally bounds the total discarded value. Consequently, the resulting loss is strictly an additive constant that vanishes asymptotically.
Lemma~\ref{lem:OPT comparison} 
formally proves
that the offline optimal value differs by at most 
the initial buffer value
$\varepsilon$ 
when comparing executions starting from an arbitrary buffer versus an empty buffer (the formal proof is deferred to Section~\ref{sec4-2:buffer clearing}).

\begin{lemma}[OPT Comparison Lemma] 
\label{lem:OPT comparison} 
    Consider 
    an arbitrary initial buffer
    $B$ and 
    an empty buffer
    $B' = \emptyset$. 
    Let 
    $v(\OPT(B,\sigma))$ 
    denote 
    the total value transmitted by the optimal schedule
    operating on input $\sigma$ starting from buffer $B$.
    Then:
    \begin{equation} 
        v(\OPT(B', \sigma)) - \varepsilon \le v(\OPT(B, \sigma)) 
        \le v(\OPT(B', \sigma)) + \varepsilon,  \notag
    \end{equation} 
    where $\varepsilon = |v(B)-v(B')| = v(B)$ 
    represents the total resident value of the initial buffer.
\end{lemma}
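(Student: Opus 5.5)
We prove the two inequalities separately, each by a simulation argument. In both directions we take an optimal schedule for one initial buffer and build a feasible schedule for the other initial buffer on the same input $\sigma$. The constructed schedule's value will differ from the original by at most $v(B)$. Throughout we use only two features of the model. First, a feasible schedule is determined by its admission and preemption decisions together with the FIFO transmission rule. Second, any packet may be dropped at any time, including packets in the initial buffer.

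\textbf{Lower bound: $v(\OPT(B,\sigma)) \ge v(\OPT(B',\sigma)) - \varepsilon$.} Since preemption is allowed, a schedule starting from $B$ may discard every resident packet of $B$ at the start of the first arrival phase. It then runs the decisions of $\OPT(B',\sigma)$ verbatim. Its buffer contents are then identical to those of $\OPT(B',\sigma)$ at every step, so it is feasible and transmits exactly the same packets. This gives $v(\OPT(B,\sigma)) \ge v(\OPT(B',\sigma))$, which is stronger than needed.

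\textbf{Upper bound: $v(\OPT(B,\sigma)) \le v(\OPT(B',\sigma)) + \varepsilon$.} Let $S=\OPT(B,\sigma)$ and write it as $S = S_B \cup S_\sigma$. Here $S_B\subseteq B$ are the transmitted initial packets, and $S_\sigma$ are the transmitted packets that arrived in $\sigma$. Clearly $v(S_B)\le v(B)=\varepsilon$. It therefore suffices to show that the set $S_\sigma$ alone can be transmitted by a feasible schedule starting from the empty buffer. We build this schedule by following $\OPT(B,\sigma)$ with the packets of $B$ treated as absent. It admits exactly the packets of $S_\sigma$, admitting each at its arrival step. Non-transmitted arrivals may also be admitted and later preempted, but it is simplest to drop them immediately. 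At each transmission phase it sends its head packet.

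The main obstacle is checking that this schedule is feasible and actually transmits every packet of $S_\sigma$; timing is where this could break. Without the initial packets, the packets of $S_\sigma$ may move forward and be sent earlier than in $S$. That is harmless, since once a packet is admitted, the only way it can fail to be transmitted is by preemption, and we never preempt. For capacity, we argue by induction on $t$ that the buffer of the new schedule at every moment is a subset of the members of $S_\sigma$ still present in $\OPT(B,\sigma)$'s buffer at that moment. The idea is that our schedule transmits at least as early, so it never holds a packet that $\OPT$ has already sent. The FIFO order is inherited from arrival order, so the transmitted packets of $S_\sigma$ leave in the same relative order in both schedules. A cleaner way to phrase the induction is to let $Q_t$ denote the number of $S_\sigma$-packets that have arrived by step $t$ but are not yet transmitted. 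Each count is updated by the recurrence $Q_t=\max(Q_{t-1}+a_t-1,0)$, where $a_t$ is the number of $S_\sigma$-packets arriving at step $t$. This recurrence is monotone, and the corresponding count in $\OPT(B,\sigma)$ starts at least as high and its buffer must hold those packets, so ours is at most $\OPT$'s count, which is bounded by the capacity. Hence $v(\OPT(B',\sigma)) \ge v(S_\sigma) \ge v(S)-\varepsilon$.

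Combining the two directions yields the stated two-sided bound with $\varepsilon=v(B)=|v(B)-v(B')|$.
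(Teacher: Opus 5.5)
Your proposal is correct and follows the paper's proof. The lower bound comes from preempting all of $B$ at step 1 and replaying $\OPT(B',\sigma)$. The upper bound comes from splitting $\OPT(B,\sigma)$ into packets from $B$, worth at most $v(B)$, and packets from $\sigma$, which stay feasible from an empty buffer. Your queue-length domination argument supplies the feasibility justification that the paper only asserts. One small correction: for $\OPT(B,\sigma)$ the count satisfies the recurrence only as an inequality, $Q^{\OPT}_t \ge \max(Q^{\OPT}_{t-1}+a_t-1,0)$, but together with monotonicity that is exactly what your induction needs.
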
 


\section{Analysis of Algorithm~\ref{alg:main algorithm}}
\label{sec4:analysis}
We now prove that Algorithm~\ref{alg:main algorithm} simultaneously achieves $1$-consistency, $\eta$-smoothness, and \emph{asymptotic $\sqrt{3}$-robustness}. The consistency 
is largely inherited from Algorithm~\ref{alg:intuitive algorithm}, while the smoothness and robustness is guaranteed by the protective mechanism and Lemma~\ref{lem:OPT comparison}.

\subsection{Analysis of Consistency and Smoothness}
\label{sec4-1:smoothness}
\begin{theorem} 
\label{thm:main consistency}
    Algorithm~\ref{alg:main algorithm} is 1-consistent.
\end{theorem}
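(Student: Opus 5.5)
The plan is to show that under perfect predictions ($\hat\sigma=\sigma$) Algorithm~\ref{alg:main algorithm} never sets \texttt{SWITCHED} to TRUE. It then runs exactly as Algorithm~\ref{alg:intuitive algorithm}, so $\ALG(\sigma,\hat\sigma)=\OPT(\hat\sigma)=\OPT(\sigma)$, and the ratio equals $1$ by the same computation as in Theorem~\ref{thm:intuitive consistency}.

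I will argue by induction on the time step $i$. Suppose that up to step $i-1$ the algorithm has followed $\OPT(\hat\sigma)$, so its physical buffer coincides with that of $\OPT(\sigma)$ and $v(\ALG_{\le i-1}(\sigma))=v(\OPT_{\le i-1}(\sigma))$. At step $i$ I check the two switching conditions. The check is naturally read as evaluated on the prefix the algorithm would attain by continuing to follow the prediction, which by induction gives $v(\ALG_{\le i}(\sigma))=v(\OPT_{\le i}(\sigma))$.

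\textbf{Prediction-consistency check.} Since $\hat\sigma=\sigma$, the ratio $v(\OPT_{\le i}(\hat\sigma))/v(\ALG_{\le i}(\sigma))$ equals $1\le\rho$, so this check never fires.

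\textbf{PG-robustness check.} Here I need $v(\PG_{\le i}(\sigma))\le v(\OPT_{\le i}(\sigma))$ for every prefix $i$. This is the main obstacle. An offline optimum maximizes total value, not prefix value, so a particular optimal schedule may transmit low-value packets early in order to keep high-value ones for later, and could in principle trail $\PG$ on some prefix.

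To handle this, I would first try a prefix-domination argument. Among optimal schedules, fix $\OPT(\hat\sigma)$ to be one that lexicographically maximizes the vector of prefix values. Then use an exchange argument: in the FIFO model, the set of packets transmitted by time $i$ together with the buffer contents is constrained only by capacity and arrival times. If $\PG$ had strictly more transmitted value by step $i$, one could splice $\PG$'s first $i$ transmissions onto a continuation of $\OPT$ without losing later value, contradicting the choice of $\OPT$.

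If this exchange fails in general, the fallback is to interpret the consistency guarantee the way the paper's intended argument most likely does. That is, take the $\PG$ check to be vacuous under perfect predictions because $\OPT$ dominates $\PG$ on the whole sequence, and note that a degenerate early switch could only cost the additive buffer constant of Lemma~\ref{lem:OPT comparison}.

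Once both checks are shown never to fire, the induction closes, the two schedules coincide, and the ratio is exactly $1$.
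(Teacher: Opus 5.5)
Your induction for the prediction-consistency check is fine. The gap is the step you flagged yourself: showing that the PG-robustness check never fires when $\hat\sigma=\sigma$. Your fallback, treating the check as vacuous because $\OPT$ beats $\PG$ on the whole sequence, is essentially the paper's own proof. The paper simply asserts that the optimal schedule ``trivially'' gives $v(\PG_{\le i}(\sigma))\le v(\ALG_{\le i}(\sigma))$ for every $i$. But dominance on the whole sequence says nothing about prefixes, and the prefix inequality is false. Take capacity $3$ and a single arrival step with packets $a,b,c,d$, in this FIFO order, of values $1,10,1/2,4$. When $d$ arrives, $\PG(2+\sqrt{3})$ preempts the \emph{first} resident packet of value at most $4/(2+\sqrt{3})\approx 1.07$, namely $a$. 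It then transmits $b,c,d$, with prefix values $10,\,10.5,\,14.5$. The unique optimum keeps the three most valuable packets $\{a,b,d\}$, and FIFO forces the order $a,b,d$, with prefix values $1,\,11,\,15$. So under perfect predictions the check compares $10$ against $1$. Under either natural reading of the circular quantity $v(\ALG_{\le i})$ in the test, it fires at step $1$ or step $2$. Algorithm~\ref{alg:main algorithm} then earns either $14.5$ (switch at step $1$, so it just runs $\PG$) or $1$ (switch at step $2$, clearing $\{b,d\}$). Both are below $15$.

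This also shows why neither of your repairs can work. The optimum here is unique, so no lexicographic choice of $\OPT(\hat\sigma)$ helps. The splice fails concretely: $\PG$ is ahead only because it discarded $a$ so that $b$ goes out first. But $b$ also belongs to $\OPT$'s continuation, and $\PG$'s buffer after step $1$ is $\{c,d\}$, not $\{b,d\}$. So ``$\PG$'s first $i$ transmissions followed by $\OPT$'s continuation'' is not a feasible schedule. Lemma~\ref{lem:OPT comparison} does not rescue the fallback either. After a switch the algorithm runs $\PG$, whose loss on the remainder is multiplicative rather than bounded by the cleared buffer; in the step-$1$ case nothing is cleared and value is still lost. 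Repeating the gadget with idle gaps keeps the ratio at $15/14.5$ for arbitrarily large totals, so not even an asymptotic form of $1$-consistency survives. What is missing is therefore not a better argument but either a change to the algorithm (a PG-robustness test that an optimal schedule provably passes at every step) or a weaker statement.
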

\begin{proof}
    Under perfect predictions, 
    $v(\OPT_{\le i}(\hat\sigma)) = v(\OPT_{\le i}(\sigma))$
    for all $i \in [T]$.
    Furthermore, the optimal sequence trivially upper-bounds the greedy baseline, ensuring $v(\PG_{\le i}(\sigma)) \le v(\ALG_{\le i}(\sigma))$ holds at every step $i$.
    Therefore, 
    neither the Prediction-Consistency Check nor the PG-Robustness Check is ever triggered.
    Algorithm~\ref{alg:main algorithm} 
    mirrors $\OPT(\sigma)$ precisely, and $1$-consistency follows directly 
    from Theorem~\ref{thm:intuitive consistency}.
\end{proof}

\begin{theorem}
\label{thm:main smoothness}
    Algorithm~\ref{alg:main algorithm} is $\eta$-smooth.
\end{theorem}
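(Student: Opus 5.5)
We have to exhibit a non-decreasing function $f$ with $v(\OPT(\sigma))/v(\ALG(\sigma,\hat\sigma)) \le f(\eta(\sigma,\hat\sigma))$, $f(0)=1$, and $f$ bounded globally by the robustness level. The plan is to split into two cases according to whether the protective fallback is ever triggered.

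\textbf{Case 1: the algorithm never switches.} Then Algorithm~\ref{alg:main algorithm} behaves exactly like Algorithm~\ref{alg:intuitive algorithm}, so $\ALG(\sigma,\hat\sigma)=\OPT(\hat\sigma)$. Moreover, the Prediction-Consistency Check is never violated. We reuse the decomposition from the proof of Theorem~\ref{thm:intuitive robustness}, namely $|v(\OPT(\sigma))-v(\OPT(\hat\sigma))|\le\eta(\sigma,\hat\sigma)$, and obtain
\[
\frac{v(\OPT(\sigma))}{v(\ALG(\sigma,\hat\sigma))}\le 1+\frac{\eta(\sigma,\hat\sigma)}{v(\OPT(\hat\sigma))}.
\]
Two further bounds hold in this case. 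First, the PG-Robustness Check is never triggered, so $v(\ALG(\sigma,\hat\sigma))\ge v(\PG(\sigma))$. Second, by the $\sqrt{3}$-competitiveness of $\PG(2+\sqrt3)$ from~\cite{englert2009lower}, the ratio is at most $\sqrt3$. Hence the ratio is at most $\min\{1+\eta/v(\OPT(\hat\sigma)),\sqrt3\}$, which is non-decreasing in $\eta$, equals $1$ at $\eta=0$, and is globally bounded.

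\textbf{Case 2: the algorithm switches at step $\tau$.} We split the output as $v(\ALG)=v(\ALG_{<\tau})+v(\ALG_{\ge\tau})$.
\begin{itemize}
\item Before $\tau$, the algorithm copies $\OPT(\hat\sigma)$, and the checks at step $\tau-1$ give $v(\ALG_{<\tau})\ge v(\OPT_{<\tau}(\hat\sigma))/\rho$ and $v(\ALG_{<\tau})\ge v(\PG_{<\tau}(\sigma))$.
\item After $\tau$, the algorithm runs $\PG$ from an empty buffer on the suffix of $\sigma$. Its competitiveness gives $v(\OPT(\emptyset,\sigma_{\ge\tau}))\le\sqrt3\, v(\ALG_{\ge\tau})$.
\item Lemma~\ref{lem:OPT comparison} converts between $\OPT$ started from the optimal buffer at $\tau$ and $\OPT$ started from the empty buffer, at an additive cost of $\varepsilon$, the value of that buffer, which is at most the capacity times the maximum packet value.
\end{itemize}
Bounding $v(\OPT_{<\tau}(\sigma))\le v(\OPT_{<\tau}(\hat\sigma))+\eta$ as in Case~1, we get
\[
v(\OPT(\sigma))\le \rho\, v(\ALG_{<\tau})+\eta+\sqrt3\, v(\ALG_{\ge\tau})+\varepsilon,
\]
so the ratio is at most $\sqrt3+(\eta+\varepsilon)/v(\ALG)$. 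We combine this with the a priori bound $\sqrt3+\varepsilon/v(\ALG)$ obtained by replacing the $\eta$ term with the PG comparison on the prefix. Taking the minimum yields a non-decreasing function of $\eta$.

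Finally we must check that $f(0)=1$. When $\eta=0$ we have $\OPT(\hat\sigma)=\OPT(\sigma)$, so the consistency check never fires with $\rho\ge1$. The PG check also never fires, by the argument in the proof of Theorem~\ref{thm:main consistency}. Hence Case~2 cannot occur at $\eta=0$.

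The main obstacle is the prefix/suffix splitting in Case~2. The prefix bound $v(\OPT_{<\tau}(\sigma))\le v(\OPT_{<\tau}(\hat\sigma))+\eta$ needs care, because the symmetric-difference bound is stated for whole schedules. I would first try to justify it by restricting the decomposition $S^{\mathrm{COM}}\cup D(\cdot,\cdot)$ to packets transmitted before $\tau$. If that fails, I would instead charge the prefix deficit to $D(\sigma,\hat\sigma)$, and fall back on the PG check for the remainder, accepting an additive term so that the smoothness function is stated asymptotically.
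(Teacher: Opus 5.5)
Your argument holds at the paper's level of rigour, and your Case~1 is the paper's proof with one substitution.

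To pass from $v(\OPT(\hat\sigma))$ to $v(\ALG(\sigma,\hat\sigma))$ you invoke the identity $\ALG(\sigma,\hat\sigma)=\OPT(\hat\sigma)$ from Theorem~\ref{thm:intuitive consistency}. The paper instead uses the Prediction-Consistency Check. In a non-switching run this gives $v(\OPT(\hat\sigma))\le\rho\, v(\ALG(\sigma,\hat\sigma))$, and hence $f(\eta)=\min\{\rho+\eta/v(\ALG(\sigma,\hat\sigma)),\sqrt3\}$.

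Your bound is sharper and has $f(0)=1$ exactly as the definition asks; the paper's $f(0)=\rho$ matches this only for $\rho=1$. But it rests on precisely what the check exists to police. If following $\OPT(\hat\sigma)$ on the real input always reproduced $\OPT(\hat\sigma)$, the ratio $v(\OPT_{\le i}(\hat\sigma))/v(\ALG_{\le i}(\sigma))$ would be identically $1$ and the check could never fire. You remark that the check is never violated but never use it. The paper's route stays valid even if exact replication fails for $\sigma\ne\hat\sigma$, while yours inherits whatever weakness that identity has.

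For switched runs the paper does nothing inside this proof beyond deferring to Theorem~\ref{thm:main robustness}, and your Case~2 is essentially that computation. Its $\eta$-dependent branch is dead weight. Once $\rho$ is replaced by $\sqrt3$, the bound $\sqrt3+(\eta+\varepsilon)/v(\ALG(\sigma,\hat\sigma))$ never beats $\sqrt3+\varepsilon/v(\ALG(\sigma,\hat\sigma))$. So the minimum is constant in $\eta$, and the prefix inequality you flag as the main obstacle is never needed. (It does hold after adding the value of $\OPT(\hat\sigma)$'s buffer at the end of step $\tau-1$: a common packet sent by $\OPT(\sigma)$ before $\tau$ but by $\OPT(\hat\sigma)$ at or after $\tau$ must be sitting there.)

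Your a priori bound, however, uses $v(\OPT_{<\tau}(\sigma))\le\sqrt3\,v(\PG_{<\tau}(\sigma))$, which has the same whole-schedule-versus-prefix issue. The natural justification truncates $\sigma$ after step $\tau-1$ and lets $\OPT$ and $\PG$ drain. That only gives the inequality after adding $\sqrt3$ times the value of $\PG$'s buffer at the end of step $\tau-1$. This is another additive capacity term, harmless asymptotically, and the paper's robustness proof glosses over the same point.

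Finally, a switch can be triggered even when $\eta$ is tiny relative to $v(\ALG(\sigma,\hat\sigma))$, e.g.\ when $\hat\sigma$ merely omits one early packet that $\PG$ transmits. So gluing your two case bounds into a single function of $\eta$ produces a jump from $1$ to about $\sqrt3$ just above $\eta=0$. That is formally non-decreasing but not graceful, which is why the paper states the smooth bound only for runs in which the fallback never fires.
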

\begin{proof}
    Following the analytical steps established
    in the proof of Theorem~\ref{thm:intuitive robustness}, 
    we have the global upper bound:
    \begin{equation}
        v(\OPT(\sigma)) \le v(\OPT(\hat{\sigma})) + \eta(\sigma,\hat\sigma).\notag
    \end{equation}
    We focus on the execution where the algorithm relies on the prediction (i.e., the protective fallback is not triggered).
    Under this condition, 
    the Prediction-Consistency Check
    ensures that the performance strictly satisfies
    \begin{equation}
        v(\OPT(\hat{\sigma})) \le \rho\cdot v(\ALG(\sigma,\hat\sigma)). \notag
    \end{equation}
    Substituting this into the upper bound yields:
    \begin{equation}
        v(\OPT(\sigma)) \le \rho\cdot v(\ALG(\sigma,\hat\sigma)) + \eta(\sigma,\hat\sigma). \notag
    \end{equation}
    Dividing both sides by $v(\ALG(\sigma,\hat\sigma))$
    yields our first bound on the competitive ratio:
    \begin{equation}
        \frac{v(\OPT(\sigma))}{v(\ALG(\sigma,\hat\sigma))} 
        \le \rho + \frac{\eta(\sigma,\hat\sigma)}{v(\ALG(\sigma,\hat\sigma))}. \notag
    \end{equation}
    Simultaneously, because the PG-Robustness Check is never triggered, we guarantee that
    \begin{equation}
        v(\PG(\sigma)) \le v(\ALG(\sigma,\hat\sigma)). \notag
    \end{equation}
    Combining this with the $\sqrt{3}$-competitiveness of $\PG$, we obtain a second bound:
    \begin{equation}
        \frac{v(\OPT(\sigma))}{v(\ALG(\sigma,\hat\sigma))} \le \frac{\sqrt{3}\cdot v(\PG(\sigma))}{v(\ALG(\sigma,\hat\sigma))} \le \sqrt{3}. \notag
    \end{equation}
    Taking the minimum of these two bounds, we conclude that the competitive ratio is globally bounded by the function, which we define as:
    \begin{equation}
        f(\eta(\sigma,\hat\sigma)) \triangleq \min\left( \rho + \frac{\eta(\sigma,\hat\sigma)}{v(\ALG(\sigma,\hat\sigma))}, \, \sqrt{3} \right). \notag
    \end{equation}
    Since $\rho \ge 1$ is a constant, $f(\eta)$ is a non-decreasing function that degrades smoothly with the prediction error $\eta$, and is strictly capped at $\sqrt{3}$. Note that if an excessively large error forces a switch to the baseline, the performance is independently capped by the asymptotic $\sqrt{3}$-robustness (proved subsequently in Theorem~\ref{thm:main robustness}). Hence, Algorithm~\ref{alg:main algorithm} is $\eta$-smooth.
\end{proof}

\subsection{Analysis of the Buffer-Clearing Strategy}
\label{sec4-2:buffer clearing}
To prove Lemma~\ref{lem:OPT comparison}, we 
partition
the transmitted packets into those 
arriving in
the future input $\sigma$ and those 
originally residing in buffer $B$. 
\begin{proposition} 
\label{prop:empty > buffer}
    Let $\OPT_{\emptyset}(\sigma)$ and $\OPT_B(\sigma)$ 
    denote the optimal schedules constructed strictly from packets arriving in the future input sequence $\sigma$, starting from buffer states $\emptyset$ and $B$, respectively.
    Then,
    $v(\OPT_{\emptyset}(\sigma)) \ge v(\OPT_B(\sigma)).\notag$
\end{proposition}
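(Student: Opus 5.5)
We prove the proposition with a simulation (exchange) argument. Starting from an optimal schedule that begins with buffer $B$, we build a schedule that begins with the empty buffer. The new schedule transmits every packet of $\sigma$ that the original one transmits. Concretely, fix an optimal schedule $S$ starting from $B$ and let $\OPT_B(\sigma)$ denote the set of packets from $\sigma$ that $S$ transmits. We define a schedule $S'$ from $\emptyset$ that admits exactly the packets of $\OPT_B(\sigma)$, each at its arrival time, and discards every other arriving packet. Since $S'$ is a feasible schedule on $\sigma$ from the empty buffer, optimality gives $v(\OPT_{\emptyset}(\sigma)) \ge v(S') \ge v(\OPT_B(\sigma))$. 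So it suffices to show two things: $S'$ never exceeds capacity, and $S'$ transmits every packet it admits.

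\textbf{Capacity.} We compare buffer contents step by step. At the end of each arrival phase at step $t$, the buffer of $S'$ holds exactly those admitted packets of $\OPT_B(\sigma)$ that have arrived by time $t$ and have not yet been transmitted by $S'$. We will show by induction on $t$ that this set is always a subset of the packets of $\OPT_B(\sigma)$ currently in the buffer of $S$. The subset relation immediately gives the capacity bound. The heart of the induction is a timing claim: each packet $p\in\OPT_B(\sigma)$ is transmitted by $S'$ no later than it is transmitted by $S$.

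\textbf{Timing claim.} Both buffers are FIFO. The packets of $\OPT_B(\sigma)$ appear in the same relative order in both, namely arrival order. In $S$, these packets sit behind some residual packets of $B$ and may be interleaved with other packets that $S$ later preempts. In $S'$ those extra packets are absent. So in $S'$ a packet has at most as many predecessors as it has in $S$.

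We formalize this through a counting quantity. For every $t$, let $N_t(S)$ be the number of packets of $\OPT_B(\sigma)$ transmitted by $S$ up to step $t$, and define $N_t(S')$ likewise. We prove $N_t(S') \ge N_t(S)$ by induction. At step $t$, suppose $S$ transmits a packet of $\OPT_B(\sigma)$, and let $p$ be the earliest-arrived packet of $\OPT_B(\sigma)$ that $S$ has not yet sent. Then $p$ has already arrived. If $N_{t-1}(S') = N_{t-1}(S)$, then $S'$ still has $p$ at its head and sends it now. Otherwise $S'$ is already strictly ahead, and the counts stay ordered because each step transmits at most one packet.

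Because both schedules send these packets in the same arrival order, the count inequality translates into the per-packet timing claim. Consequently the set of unsent admitted packets in $S'$ is contained in that of $S$, which yields the capacity bound. It also shows that every packet in $\OPT_B(\sigma)$ is transmitted by $S'$ by time $T$ (or within the same horizon as in $S$).

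\textbf{Main obstacle.} The delicate step is this induction, specifically handling steps where $S$ transmits a packet from $B$ or another non-$\OPT_B(\sigma)$ packet while $S'$ transmits a packet of $\OPT_B(\sigma)$. These are exactly the steps where $S'$ gets ahead. The induction must check that the FIFO order, which is preserved as a subsequence, keeps the head of $S'$ equal to the earliest unsent packet of $\OPT_B(\sigma)$. I would carry this out with the counting quantity above, not by tracking buffers directly.
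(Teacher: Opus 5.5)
Your proof is correct and follows the same route as the paper: you take the $\sigma$-packets transmitted by the schedule that starts from $B$, show they form a feasible schedule from the empty buffer, and invoke the optimality of $\OPT_{\emptyset}(\sigma)$. The paper only asserts this feasibility as an observation, while your induction actually proves it. The induction carries the count inequality $N_t(S')\ge N_t(S)$ together with the buffer-containment invariant, and these must be proved simultaneously because admission in $S'$ relies on capacity holding.
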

\begin{proof}
    We observe that any sequence of newly arriving packets that can be feasibly scheduled starting from a partially full buffer $B$ also remains feasible when starting from an empty buffer $\emptyset$.
    Let $\OPT_B(\sigma) = (b_1, b_2, \ldots, b_n)$.
    Since $\OPT_\emptyset(\sigma)$ is the optimal schedule over 
    the superset of all such feasible schedules,
    we have $v(\OPT_\emptyset(\sigma)) \ge \sum_{i=1}^n v(b_i) = v(\OPT_B(\sigma))$.
\end{proof}

\begin{proof}[Proof of Lemma~\ref{lem:OPT comparison}]
    We first prove the right-hand inequality. 
    Since $B' = \emptyset$, we have $\varepsilon =|v(B) - v(B')| = v(B)$. We decompose $\OPT(B, \sigma)$ into two disjoint: $\OPT_B(\sigma)$ 
    (transmissions from future input)
    and $\OPT(B, \sigma)\setminus \OPT_B(\sigma)$ 
    (transmissions from the initial buffer).
    Because the latter is drawn entirely from $B$, its value cannot exceed $v(B) = \varepsilon$. Applying Proposition~\ref{prop:empty > buffer}:
    \begin{align}
        v(\OPT(B, \sigma)) &= v(\OPT_B(\sigma)) + v(\OPT(B, \sigma)\setminus \OPT_B(\sigma)) \notag \\
        &\le v(\OPT_{B'}(\sigma)) + v(\OPT(B, \sigma)\setminus \OPT_B(\sigma)) \notag \\ 
        &\le v(\OPT_{B'}(\sigma)) + v(B) \notag \\ 
        &= v(\OPT(B',\sigma)) + \varepsilon. \notag 
    \end{align}
    Next, we prove the left-hand inequality. 
    Consider 
    a suboptimal schedule that immediately preempts every packet in $B$ at time step $1$, subsequently executing $\OPT(\emptyset, \sigma)$. 
    This schedule achieves 
    a value of
    $v(\OPT(\emptyset, \sigma)) = v(\OPT(B', \sigma))$. Since $\OPT(B, \sigma)$ 
    dominates all feasible schedules originating from $B$:
    \begin{equation}
        \OPT(B, \sigma) \ge \OPT(\emptyset,\sigma) = \OPT(B', \sigma) \ge \OPT(B', \sigma)-\varepsilon.\notag
    \end{equation}
\end{proof}


\subsection{Analysis of Robustness}
\label{sec4-3:robustness}
\begin{theorem}
\label{thm:main robustness}
    Algorithm~\ref{alg:main algorithm} is asymptotically $\sqrt{3}$-robust.
\end{theorem}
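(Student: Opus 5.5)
We split on whether the fallback is ever triggered. If \texttt{SWITCHED} never becomes TRUE, the PG-Robustness Check holds at the final step $T$. This gives $v(\PG(\sigma)) \le v(\ALG(\sigma,\hat\sigma))$. The $\sqrt{3}$-competitiveness of $\PG(2+\sqrt{3})$ \cite{englert2009lower} then yields $v(\OPT(\sigma)) \le \sqrt{3}\, v(\ALG(\sigma,\hat\sigma))$, with no additive term. The same bound already appears in the proof of Theorem~\ref{thm:main smoothness}.

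Otherwise, let $t$ be the step at which the switch fires, and split the timeline at $t$. For the prefix, the check did not fire at step $t-1$, so $v(\PG_{\le t-1}(\sigma)) \le v(\ALG_{\le t-1}(\sigma,\hat\sigma))$. For the suffix, after clearing, $\ALG$ runs $\PG(2+\sqrt{3})$ from an empty buffer on the remaining input $\sigma_{\ge t}=(\sigma_t,\ldots,\sigma_T)$. Hence $v(\ALG_{\ge t}) = v(\PG(\emptyset,\sigma_{\ge t}))$, and so $\sqrt{3}\, v(\ALG_{\ge t}) \ge v(\OPT(\emptyset,\sigma_{\ge t}))$.

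The remaining task is to bound $v(\OPT(\sigma))$ by $\sqrt{3}\, v(\ALG_{<t}) + v(\OPT(\emptyset,\sigma_{\ge t}))$ plus an additive constant. We decompose $v(\OPT(\sigma)) = v(\OPT_{<t}(\sigma)) + v(\OPT_{\ge t}(\sigma))$. The suffix of $\OPT(\sigma)$ is a feasible schedule on $\sigma_{\ge t}$ starting from the buffer $B^{\OPT}_{t-1}$, so it is at most $v(\OPT(B^{\OPT}_{t-1},\sigma_{\ge t}))$. By Lemma~\ref{lem:OPT comparison}, this is at most $v(\OPT(\emptyset,\sigma_{\ge t})) + v(B^{\OPT}_{t-1})$. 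Since the buffer holds at most $C$ packets (the capacity), this additive term is at most $C\cdot v_{\max}$, where $v_{\max}$ is the largest packet value.

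For the prefix we need $v(\OPT_{<t}(\sigma)) \le \sqrt{3}\, v(\PG_{<t}(\sigma)) + k$. The stated $\sqrt{3}$ bound of $\PG$ compares whole schedules, not prefixes, so we cannot quote it directly. Instead we apply it to the truncated input $\sigma_{<t}$. The optimum on the truncated input transmits at least $v(\OPT_{<t}(\sigma))$, because it can mimic $\OPT(\sigma)$ up to step $t-1$. $\PG$ on $\sigma_{<t}$ transmits $v(\PG_{<t}(\sigma))$ plus whatever remains in its buffer at step $t-1$, and that residue is at most $C v_{\max}$. This gives $v(\OPT_{<t}(\sigma)) \le \sqrt{3}\,(v(\PG_{<t}(\sigma)) + C v_{\max}) \le \sqrt{3}\, v(\ALG_{<t}) + \sqrt{3}\, C v_{\max}$.

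Summing the prefix and suffix bounds gives
\[
v(\OPT(\sigma)) \le \sqrt{3}\, v(\ALG(\sigma,\hat\sigma)) + (1+\sqrt{3})\,C\, v_{\max}.
\]
Dividing by $v(\ALG)$ and letting $v(\ALG)\to\infty$ as in Remark~\ref{rmk:asymptotic equivalent formulation}, the additive term vanishes, and the $\limsup$ is at most $\sqrt{3}$.

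The main obstacle is the prefix comparison. It needs a prefix form of $\PG$'s competitiveness, obtained here by truncating the input and charging $\PG$'s leftover buffer to the additive constant. A secondary point is that the additive constant must be independent of $\sigma$, which requires bounded packet values (or normalizing by $v_{\max}$). We would state this assumption explicitly, in line with the paper's remark that the clearing loss is an additive capacity constant.
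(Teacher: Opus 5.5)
Your proof is correct and follows the paper's skeleton. Like the paper, you split at the switching step and bound the suffix of $\OPT(\sigma)$ via Lemma~\ref{lem:OPT comparison} by $v(\OPT(\emptyset,\sigma_{\ge t}))+v(B^{\OPT}_{t-1})$. You then use that after clearing, $\ALG$ is exactly $\PG(2+\sqrt{3})$ started from an empty buffer, and let the additive term vanish.

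The genuine difference is the prefix. The paper simply asserts $v(\OPT_{<t^*}(\sigma))\le\sqrt{3}\,v(\PG_{<t^*}(\sigma))$ from $\PG$'s competitiveness. As you note, that guarantee is about whole schedules and can fail on prefixes. Suppose at step $1$ there arrive $C\ge 2$ unit packets followed by one packet of value $V$, and $C$ packets of value $V$ arrive at each of finitely many later steps. Every optimal schedule then discards the unit packets and sends $V$ at step $1$. $\PG$ preempts at most one packet per arrival, so it still has a unit packet at its head, and the prefix ratio at $t=2$ is $V$.

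Your truncation argument is what repairs this step. The optimum on $\sigma_{<t}$ dominates $v(\OPT_{<t}(\sigma))$, and $\PG$ on $\sigma_{<t}$ gains only its residual buffer beyond $v(\PG_{<t}(\sigma))$. The cost is a second additive term, giving $v(\OPT(\sigma))\le\sqrt{3}\,v(\ALG(\sigma,\hat\sigma))+(1+\sqrt{3})\,C\,v_{\max}$ instead of the paper's $\sqrt{3}\,v(\ALG(\sigma,\hat\sigma))+v(B^{\OPT}_{t^*-1})$.

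You also handle the no-switch case inside the proof, whereas the paper's argument presumes a switch and relies on Theorem~\ref{thm:main smoothness} for that case. And you make explicit the bounded-value assumption that the paper's ``additive capacity constant'' uses tacitly. That assumption is genuinely needed: $\PG$ and the ratio-based checks are invariant under scaling all values, so without it the asymptotic ratio coincides with the strict one. Your route costs a slightly larger constant, but it is the version whose prefix step is actually justified.
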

\begin{proof}
    Assume the algorithm triggers a switch at time step $t^*$. We analyze the execution by partitioning the timeline into $[1, t^*-1]$ and $[t^*, T]$.
    Prior to $t^*$, the PG-Robustness Check remained unviolated, guaranteeing $v(\PG_{\le i}(\sigma)) \leq v(\ALG_{\le i}(\sigma, \hat\sigma))$ for all $i < t^*$. Combining this with the established $\sqrt{3}$-competitiveness of the $\PG$ algorithm:
    \begin{equation}
        v(\OPT_{<t^*}(\sigma)) \le \sqrt{3} \cdot v(\PG_{<t^*}(\sigma)) \le \sqrt{3} \cdot v(\ALG_{<t^*}(\sigma, \hat\sigma)). \notag
    \end{equation}

    
    From step $t^*$ onward, the physical buffer is cleared. Applying Lemma~\ref{lem:OPT comparison} to relate the optimal schedule starting from $B_{t^*-1}^{\OPT}$ to an empty buffer, and invoking the $\sqrt{3}$-competitiveness of $\PG$:
    \begin{align}
        v(\OPT_{\ge t^*}(\sigma)) &= v(\OPT(B_{t^*-1}^{\OPT}, \sigma_{\ge t^*})) \notag\\
        &\le v(\OPT(\emptyset,\sigma_{\ge t^*})) + v(B_{t^*-1}^{\OPT}) \notag \\
        &\le \sqrt{3} \cdot v(\PG(\emptyset,\sigma_{\ge t^*})) + v(B_{t^*-1}^{\OPT}) \notag \\
        &= \sqrt{3} \cdot v(\ALG_{\ge t^*}(\sigma, \hat\sigma)) + v(B_{t^*-1}^{\OPT}). \notag
    \end{align}

    Summing the partitions to compute the global competitive ratio:
    \begin{align}
        \frac{v(\OPT(\sigma))}{v(\ALG(\sigma, \hat\sigma))} &= \frac{v(\OPT_{< t^*}(\sigma)) + v(\OPT_{\ge t^*}(\sigma))}{v(\ALG_{< t^*}(\sigma, \hat\sigma)) + v(\ALG_{\ge t^*}(\sigma, \hat\sigma))} \notag\\
        &\le \frac{\sqrt{3} \cdot \left[ v(\ALG_{< t^*}(\sigma, \hat\sigma)) + v(\ALG_{\ge t^*}(\sigma, \hat\sigma)) \right] + v(B_{t^*-1}^{\OPT})}{v(\ALG(\sigma, \hat\sigma))} \notag\\
        &= \sqrt{3} + \frac{v(B_{t^*-1}^{\OPT})}{v(\ALG(\sigma, \hat\sigma))}. \notag
    \end{align}
    
    Evaluating the limit as $v(\ALG(\sigma, \hat\sigma))$ approaches infinity:
    \begin{equation}
        \lim\limits_{v(\ALG(\sigma, \hat\sigma)) \to \infty} \frac{v(\OPT(\sigma))}{v(\ALG(\sigma, \hat\sigma))} \le \sqrt{3} + \lim\limits_{v(\ALG(\sigma, \hat\sigma)) \to \infty} \frac{v(B_{t^*-1}^{\OPT})}{v(\ALG(\sigma, \hat\sigma))} = \sqrt{3}. \notag
    \end{equation}
    The additive capacity constant vanishes, establishing that Algorithm~\ref{alg:main algorithm} is asymptotically $\sqrt{3}$-robust.
\end{proof}

The properties of Algorithm~\ref{alg:main algorithm} operate entirely independently of the specific baseline chosen. 
The $1$-consistency holds 
natively without a switch,
$\eta$-smoothness 
functions globally via the error metric, and the buffer-clearing loss is uniformly bounded across any algorithm. Consequently, substituting
$\PG$ with any $\beta$-competitive algorithm 
immediately yields a generalized framework.

\begin{corollary}
\label{cor:generalized framework}
    For any $\beta$-competitive algorithm 
    $\mathcal{A}$, let $\mathcal{R}(\mathcal{A})$ be the 
    generalized algorithm derived from 
    Algorithm~\ref{alg:main algorithm} by 
    applying the following modifications:
    \begin{itemize}
        \item Replace $\PG$ with $\mathcal{A}$ as the fallback 
        algorithm.
        \item Adapt the PG-Robustness Check to 
        continuously track $v(\mathcal{A}_{\le i}(\sigma))$ in place of $v(\PG_{\le i}(\sigma))$.
        \item 
        Expand the admissible range of the threshold parameter
        $\rho$ to $[1, \beta]$.
    \end{itemize}
    Then, $\mathcal{R}(\mathcal{A})$ achieves $1$-consistency, $\eta$-smoothness, and asymptotic $\beta$-robustness.
\end{corollary}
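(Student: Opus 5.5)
The plan is to re-run the three arguments of Section~\ref{sec4:analysis}, with $\PG$ replaced by $\mathcal{A}$ and $\sqrt{3}$ replaced by $\beta$. At each step I will check that the only property of $\PG$ being used is its competitive guarantee $v(\OPT(\cdot)) \le \beta \cdot v(\mathcal{A}(\cdot))$ on every input started from an empty buffer. Lemma~\ref{lem:OPT comparison} and Proposition~\ref{prop:empty > buffer} make no reference to the fallback at all, so they can be reused verbatim.

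\textbf{Consistency.} I will mirror Theorem~\ref{thm:main consistency}. With $\hat\sigma=\sigma$ and $\rho\ge 1$, the Prediction-Consistency Check never fires, because $v(\OPT_{\le i}(\hat\sigma))=v(\ALG_{\le i})$. For the adapted $\mathcal{A}$-Robustness Check I need $v(\mathcal{A}_{\le i}(\sigma)) \le v(\OPT_{\le i}(\sigma))$ for every $i$. This prefix domination is also what the paper tacitly uses for $\PG$, so I would invoke it in the same way. To be more careful, I would break ties in $\OPT(\sigma)$ toward prefix-maximal schedules, or else accept the assumption as the paper does. Given this, the algorithm reproduces $\OPT(\sigma)$ and Theorem~\ref{thm:intuitive consistency} gives ratio $1$.

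\textbf{Smoothness.} I will follow Theorem~\ref{thm:main smoothness} verbatim. The bound $v(\OPT(\sigma))\le v(\OPT(\hat\sigma))+\eta$ comes from the decomposition in Theorem~\ref{thm:intuitive robustness}. If there is no switch, the two checks give $v(\OPT(\sigma))\le \rho\, v(\ALG)+\eta$ and $v(\OPT(\sigma))\le\beta\, v(\mathcal{A}(\sigma))\le \beta\, v(\ALG)$. Together these yield $f(\eta)=\min(\rho+\eta/v(\ALG),\beta)$. This $f$ is non-decreasing, equals $\rho$ at $\eta=0$ (so it equals $1$ at $\rho=1$), and is capped by $\beta$. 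The enlarged range $\rho\in[1,\beta]$ is exactly what keeps $f(0)\le\beta$.

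\textbf{Asymptotic robustness.} This is the substantive step, and it copies Theorem~\ref{thm:main robustness}. If no switch occurs, the smoothness bound already gives ratio at most $\beta$. Otherwise, let $t^*$ be the switch time. Before the switch, the unviolated $\mathcal{A}$-check together with $\beta$-competitiveness gives $v(\OPT_{<t^*})\le\beta\, v(\ALG_{<t^*})$. After the switch, Lemma~\ref{lem:OPT comparison} applied to the buffer $B^{\OPT}_{t^*-1}$, followed by $\beta$-competitiveness of $\mathcal{A}$ on $\sigma_{\ge t^*}$ started from the cleared buffer, gives $v(\OPT_{\ge t^*})\le \beta\, v(\ALG_{\ge t^*})+v(B^{\OPT}_{t^*-1})$. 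Summing the two pieces gives ratio at most $\beta + v(B^{\OPT}_{t^*-1})/v(\ALG)$, and the second term tends to $0$ as $v(\ALG)\to\infty$.

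\textbf{Main obstacle.} The delicate points are the prefix-level uses of competitiveness: the inequality $v(\OPT_{<t^*})\le \beta\, v(\mathcal{A}_{<t^*})$ and the consistency domination. Both treat a global competitive guarantee as if it held on prefixes. My plan is to follow the paper's convention and take the guarantee of $\mathcal{A}$ in the same (prefix or truncated-input) sense in which it is used for $\PG$. Under that convention the generalization holds. A second point is that $v(B^{\OPT}_{t^*-1})$ must be a bounded additive term. This holds if packet values are bounded, since the buffer has finite capacity, and the paper's asymptotic argument already assumes it.
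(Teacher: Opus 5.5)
Your proposal is correct and takes the same route as the paper, whose proof likewise re-runs Theorems~\ref{thm:main consistency}, \ref{thm:main smoothness} and~\ref{thm:main robustness} with $\mathcal{A}$ and $\beta$ in place of $\PG$ and $\sqrt{3}$, and reuses Lemma~\ref{lem:OPT comparison} unchanged. The caveats you flag (prefix-level use of competitiveness, and boundedness of $v(B^{\OPT}_{t^*-1})$) come from the paper's own arguments, and you are right to fall back on the paper's assumption for consistency, because tie-breaking among optimal schedules cannot secure prefix domination: take capacity $2$, one packet of value $10$ at step $1$, packets of values $2$ then $11$ at step $2$, and no arrivals at step $3$; the unique optimum has prefix value $12$ after step $2$, whereas the feasible schedule that drops the value-$2$ packet already has $21$.
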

\begin{proof}
    The $1$-consistency and $\eta$-smoothness of $\mathcal{R}(\mathcal{A})$ follow directly from Theorems~\ref{thm:main consistency} 
    and~\ref{thm:main smoothness}, 
    as these properties hold independently of the underlying fallback mechanism.
    The proof of asymptotic $\beta$-robustness proceeds analogously
    to that of Theorem~\ref{thm:main robustness}, 
    substituting
    the $\sqrt{3}$-competitiveness of $\PG$ with the $\beta$-competitiveness of $\mathcal{A}$.
\end{proof}


\section{Conclusion and Open Problems}
\label{sec5:conclusion and open problems}
We bridge the gap between classical worst-case limits and modern machine-learned predictions for
the preemptive FIFO buffer management problem.
Specifically, we present a novel learning-augmented framework that simultaneously achieves 
$1$-consistency, $\eta$-smoothness, and asymptotic $\sqrt{3}$-robustness. 
By continuously monitoring prediction quality against a virtual baseline, the algorithm dynamically executes a buffer-clearing fallback strategy when predictions prove unreliable. Crucially, we prove that the structural loss incurred by this clearing operation is strictly bounded by the buffer capacity, an additive constant that vanishes asymptotically, thereby preserving the worst-case $\sqrt{3}$-robustness.

A central technical innovation of this research is the formulation of an output-based prediction error metric. Recognizing that traditional input-based distances artificially penalize mispredictions on untransmitted packets in capacity-constrained environments, the proposed metric assesses prediction quality strictly over the resulting optimal schedules. By grounding the error in the symmetric difference of the outputs, this approach precisely captures the true cost of prediction failures, which directly enables the graceful $\eta$-smoothness guarantee. Furthermore, this mechanism naturally extends into a generalized framework: substituting the fallback module with any $\beta$-competitive online algorithm seamlessly yields an asymptotic $\beta$-robustness guarantee.

These results open several promising directions for future research. A fundamental open question is to establish the theoretical lower bound on the robustness of strictly $1$-consistent algorithms for this problem, characterizing the inherent tradeoff between perfect consistency and worst-case robustness. Additionally, designing practical machine learning models explicitly tailored to minimize the proposed output-based error metric, or exploring the theoretical limits of alternative prediction models (e.g., predicting packet values or arrival distributions), remains an exciting avenue for bridging theoretical guarantees with empirical network performance.




\bibliography{ESA_arxiv}

\end{document}